\newtheorem{theorem}{Theorem}
\newtheorem{lemma}{Lemma}
\newtheorem{remark}{Remark}
\newtheorem{corollary}{Corollary}
\newcommand{\fd}[1]{\mathbb{F}_{#1}}
\newcommand{\fdm}[1]{\mathbb{F}_{#1}^{*}}
\newcommand{\fdo}[1]{\mathcal{S}(\mathbb{F}_{#1})}
\newcommand{\finv}[0]{\sigma}
\providecommand{\keywords}[1]{
	\begin{flushleft}
	\small\textbf{Keywords: } #1
	\end{flushleft}
	}
\providecommand{\affili}[1]{
	\begin{center} 
	\small #1
	\end{center}
	\vspace{.1cm}}
\title{On the image of an affine subspace under the inverse function within a finite field}
\author{\normalfont N. Kolomeec, D. Bykov
}
\date{}
\begin{document}
\large
%
%
\maketitle
%
%
\par\vspace{-60pt}
%
%
 \affili{
 	Sobolev Institute of Mathematics, Russia\\
 	\texttt{ kolomeec@math.nsc.ru, den.bykov.2000i@gmail.com }
}
%
%
\begin{abstract}
	\noindent We consider the function $x^{-1}$ that inverses a finite field element $x \in \fd{p^n}$ ($p$ is prime, $0^{-1} = 0$) and affine $\fd{p}$-subspaces of $\fd{p^n}$ such that their images are affine subspaces as well. 
	It is proven that the image of an affine subspace $L$, $|L| > 2$, is an affine subspace if and only if $L = q \fd{p^k}$, where $q \in \fdm{p^n}$ and $k \mid n$. In other words, it is either a subfield of $\fd{p^n}$ or a subspace consisting of all elements of a subfield multiplied by $q$. This generalizes the results that were obtained for linear invariant subspaces in~2006. 
	As a consequence, we propose a sufficient condition providing that a function $A(x^{-1}) + b$ has no invariant affine subspaces $U$ of cardinality $2 < |U| < p^n$ for an invertible linear transformation $A: \fd{p^n} \to \fd{p^n}$ and $b \in \fdm{p^n}$. As an example, it is shown that the condition works for S-box of AES.
	Also, we demonstrate that some functions of the form $\alpha x^{-1} + b$  have no invariant affine subspaces except for $\fd{p^n}$, where $\alpha, b \in \fdm{p^n}$ and $n$ is arbitrary. 
\end{abstract}
%
%
\keywords{Finite fields, inversion, affine subspaces, invariant subspaces, APN functions.}
%
%

\section{Introduction}
	 
	In this work we consider affine subspaces whose images under the function that inverses an element of a finite field are affine subspaces as well. 
	We denote this function by $\finv$, i.е. $\finv: \fd{p^n} \to \fd{p^n}$ such that 
	$$
		\finv(x) = \begin{cases}
			x^{-1}, & \text{if }  x \neq 0\\
			0, & \text{if }  x = 0
		\end{cases},
	$$    
	where $\fd{p^n}$ is \emph{a finite field} consisting of $p^n$ elements. Hereinafter $p$ is a prime number.
	\emph{A linear subspace} $L$ of $\fd{p^n}$ is its linear $\fd{p}$-subspace. In other words, $L$ is an additive subgroup of $\fd{p^n}$.
	We denote an affine subspace of $\fd{p^n}$ by $a + L$. For any $S \subseteq \fd{p^n}$, $a \in \fd{p^n}$ and $f: \fd{p^n} \to \fd{p^n}$ we define 
	 $$
	 	a + S = \{ a + s : s \in S\}, \  aS = \{as : s \in S\} \text{ and } f(S) = \{f(s) : s \in S\}.
	 $$
	It is known that $\fd{p^k}$ is a subfield of $\fd{p^n}$ if and only if $ k \mid n$. In this case we assume that $\fd{p^k} \subseteq \fd{p^n}$. Also, $\fdm{p^n} = \fd{p^n} \setminus \{0\}$.
	According to the denotations above, we consider affine subspaces $U$ of $\fd{p^n}$ such that $\finv(U)$ are affine subspaces as well. We are interested in \emph{nontrivial} subspaces that satisfy $2 < |U| < p^n$ since any set $S$ of cardinality $1$, $2$ (for $p = 2$) or $p^n$ is always an affine subspace. 
	
	These subspaces are connected with invariant subspaces. If we know all $U$, we can construct all invariant affine subspaces for a function that is affinely equivalent to $\finv$.
	Recall that $S \subseteq \fd{p^n}$ is \emph{an invariant set} of $f$ if $f(S) \subseteq S$. Functions $f$ and $g$ are \emph{affinely equivalent} if and only if $g(x) = A(f(B(x) + b)) + a$ for all $x \in \fd{p^n}$, where $A$ and $B$ are invertible linear transformations, $a, b \in \fd{p^n}$. Similarly to subspaces, we consider linear transformations assuming that $\fd{p^n}$ is a vector space over $\fd{p}$.

	The function $\finv$ is interesting for cryptography. For instance, S-box of AES~\cite{AesFIPS, AesBook} is based on $\finv$. Its invariant linear subspaces were used to investigate group properties of the AES-like round functions~\cite{CarantiEtAl2006,CarantiEtAl2009a,CarantiEtAl2009b}. 
	An attack using an invariant subspace of several rounds of ciphering was introduced in~\cite{LeanderEtAl2011}. A more general attack which can use an invariant set is also known~\cite{TodoEtAl2016}. 
	Note that invariant sets for both of them may be constructed starting with S-boxes, see, for instance, \cite{Burov2021,TrifonovFomin2021}. 
	Note that D. Goldstein et al. and S. Mattarei proved that all invariant linear subspaces of the function $\finv$ are subfields of $\fd{p^n}$, see~\cite{GoldsteinEtAl2006,Mattarei2007}. Thus, we generalize this result in two directions: for affine subspaces and for functions that are affinely equivalent to $\finv$.
	
	At the same time, if $f(U)$ is not an affine subspace for any $2$-dimensional affine subspace $U$ of $\fd{2^n}$, then $f$ is an APN function~\cite{Nyberg1994}. Moreover, only APN functions satisfy this property.
	They are very important for cryptography. Also, there are many open questions in this area, see, for instance, \cite{Carlet2015}. If $p = 2$ and $n$ is odd, then $\finv$ is an APN function.

		

    Our contribution is the following. In Section~\ref{section:AllAffineSubspaces}, we prove that both $U$ and $\finv(U)$, $|U| > 2$, are affine subspaces if and only if $U = q \fd{p^k}$, where $q \in \fdm{p^n}$ and $k \mid n$ (Theorem~\ref{th:stableSubspaces}). Next, we show how to apply this to construct a function of the form $A(\finv(x)) + b$ without invariant affine subspaces, where $A$ is an invertible linear transformation and $b \in \fdm{p^n}$. Section~\ref{section:InvariantSubspaces} contains a sufficient condition for such functions to have no nontrivial invariant affine subspaces (Theorem~\ref{th:invatiantSuffCondition}). 
    Also, for the function $\alpha \finv(x) + b$, where  $\alpha, b \in \fdm{p^n}$, we provide a criterion on the existence of invariant affine subspaces excluding the whole $\fd{p^n}$ (Theorem~\ref{th:invAlphaB}). Section~\ref{section:AES} demonstrates that the results obtained can theoretically prove that there are no nontrivial invariant affine subspaces for S-box of AES.

\section{Subspaces that are mapped to subspaces}\label{section:AllAffineSubspaces}

In this section we find all nontrivial affine subspaces $U$ such that $\finv(U)$ is an affine subspace as well. 
We start with the following well known property of linear subspaces.
\begin{lemma}\label{lemma:linersum0} Let $L$ be a linear subspace of $\fd{p^n}$ and $|L| > 2$. Then $\sum_{x \in L} x = 0$.
\end{lemma}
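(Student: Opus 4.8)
The plan is to exploit the $\fd{p}$-linear structure of $L$ directly, since by definition a linear subspace is closed under addition and under scalar multiplication by elements of $\fd{p}$. I would split the argument according to whether $p$ is odd or $p = 2$, because the cardinality hypothesis $|L| > 2$ is only genuinely needed in the characteristic-two case.

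For odd $p$, I would use a symmetry trick. Fix any scalar $\lambda \in \fd{p}$ with $\lambda \neq 0, 1$; such a $\lambda$ exists precisely because $p > 2$. Since $L$ is $\fd{p}$-linear, the map $x \mapsto \lambda x$ permutes $L$, so writing $S = \sum_{x \in L} x$ we get $S = \sum_{x \in L} \lambda x = \lambda S$, hence $(\lambda - 1) S = 0$. As $\lambda - 1 \in \fdm{p^n}$ is invertible, this forces $S = 0$. Note that this handles every linear subspace when $p$ is odd, with no reference to $|L|$.

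For $p = 2$ the scalar trick is vacuous, the only nonzero scalar being $1$, so I would instead compute the sum on a basis. Choose an $\fd{p}$-basis $e_1, \ldots, e_k$ of $L$, where $|L| = p^k$. Expanding $S = \sum_{(c_1, \ldots, c_k) \in \fd{p}^k} \sum_i c_i e_i$ and collecting coefficients, the coefficient of each $e_i$ equals $p^{k-1} \sum_{c \in \fd{p}} c$, since fixing one coordinate leaves $p^{k-1}$ free choices for the others. For $p = 2$ this coefficient is $2^{k-1}$, which vanishes in characteristic $2$ exactly when $k \geq 2$, i.e. when $|L| > 2$; hence $S = 0$. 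The same basis computation incidentally reproves the odd case, since there $\sum_{c \in \fd{p}} c = p(p-1)/2 \equiv 0 \pmod p$, so one could package everything as a single unified argument if preferred.

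The only real subtlety, and the reason the hypothesis $|L| > 2$ appears at all, is the characteristic-two one-dimensional case $L = \{0, e\}$, where $S = e \neq 0$; the bound $|L| > 2$ is exactly what excludes this configuration. I therefore expect no significant obstacle beyond keeping careful track of this edge case, every other step being routine.
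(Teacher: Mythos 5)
Your proof is correct. For odd $p$ it is essentially the paper's own argument: the paper fixes the specific scalar $p-1$ and concludes $s = (p-1)s$, whereas you allow an arbitrary $\lambda \in \fd{p}\setminus\{0,1\}$ and conclude $(\lambda-1)S = 0$; the difference is immaterial. You diverge in the case $p = 2$: the paper decomposes $L = L' \cup (a + L')$ with $L'$ a hyperplane of $L$, pairs $x$ with $a + x$, and gets $\sum_{x \in L} x = 2^{\dim L - 1}a = 0$ using $\dim L \geq 2$, while you expand over a basis $e_1,\ldots,e_k$ and observe that each $e_i$ occurs with coefficient $p^{k-1}\sum_{c \in \fd{p}} c$, which vanishes modulo $2$ exactly when $k \geq 2$. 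Both are elementary counting arguments of the same depth; yours has the small advantage of unifying the two cases into one computation (since $\sum_{c \in \fd{p}} c \equiv 0 \pmod{p}$ for odd $p$, the basis expansion needs no case split at all), while the paper's coset decomposition avoids choosing a basis. Your explicit isolation of the excluded configuration $L = \{0, e\}$ in characteristic two as the only obstruction, and your remark that the hypothesis $|L| > 2$ is vacuous for odd $p$ (as $|L|$ is then a power of an odd prime), are both accurate and slightly sharper than what the paper states.
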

\begin{proof} 
Let $s = \sum_{x \in L} x$, $s \in \fd{p^n}$. Let $p > 2$. Since $(p-1)L = L$, it holds 
$$
	s = \sum_{x \in L}x  = \sum_{x \in L} (p-1)x = (p-1)s.
$$
But $p - 1 \neq 1$. This implies that $s = 0$.

Let $p = 2$. Then $\dim L \geq 2$ and $L = L' \cup (a + L')$, where $L'$ is a $(\dim L - 1)$-dimensional linear subspace of $L$, $a \in L$. Therefore,
$$
	s = \sum_{x \in L}x  = \sum_{x \in L'} x + \sum_{x \in L'} (a + x) = 2^{\dim L - 1} a = 0.
$$
The lemma is proven.
\end{proof}

Actually, it is enough to consider only linear subspaces.
\begin{theorem}\label{th:noaffine} Let $U$ be an affine subspace of $\fd{p^n}$, $|U| > 2$ and $\finv(U)$ be an affine subspace as well. Then $U$ is a linear subspace.
\end{theorem}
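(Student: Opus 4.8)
The plan is to reduce everything to showing that $0 \in U$, since an affine subspace $U = a + L$ containing $0$ satisfies $a \in L$ and therefore equals the linear subspace $L$. So I would assume $0 \notin U$ and aim for a contradiction. The engine of the argument is a sum identity obtained by upgrading Lemma~\ref{lemma:linersum0} to affine subspaces: if $W = c + N$ is an affine subspace with $|W| > 2$, then
\[
\sum_{x \in W} x = |N|\,c + \sum_{z \in N} z = 0,
\]
because $\dim N \ge 1$ forces $p \mid |N|$ (so $|N|\,c = 0$ in characteristic $p$), while $\sum_{z \in N} z = 0$ by Lemma~\ref{lemma:linersum0}. Since $\finv$ is an involution, it is a bijection of $\fd{p^n}$, hence $\finv(U)$ is an affine subspace with $|\finv(U)| = |U| > 2$. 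Applying the identity to $\finv(U)$ gives $\sum_{x \in U} \finv(x) = 0$.

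Now assume $0 \notin U$. Then $\finv(x) = x^{-1}$ for every $x \in U$, so the previous line becomes $\sum_{x \in U} x^{-1} = \sum_{y \in L} (a + y)^{-1} = 0$. The crux is to prove that this sum is in fact nonzero. I would set $\ell(X) = \prod_{y \in L}(X - y)$. Because the roots of $\ell$ are exactly the distinct elements of the $\fd{p}$-subspace $L$, the polynomial $\ell$ is $\fd{p}$-linearized, $\ell(X) = \sum_{i=0}^{m} a_i X^{p^i}$ with $m = \dim L$; consequently its derivative collapses to the constant $\ell'(X) = a_0$, and $a_0 = \prod_{0 \ne y \in L}(-y) \ne 0$ (equivalently, $\ell$ is separable). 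Using $L = -L$ to reindex and the logarithmic-derivative identity $\sum_{y \in L}\frac{1}{X - y} = \ell'(X)/\ell(X)$, I obtain
\[
\sum_{y \in L} (a+y)^{-1} = \sum_{y \in L} \frac{1}{a - y} = \frac{\ell'(a)}{\ell(a)} = \frac{a_0}{\ell(a)}.
\]
Here $\ell(a) \ne 0$ because $a \notin L$, so the sum equals $a_0/\ell(a) \ne 0$, contradicting $\sum_{x \in U} x^{-1} = 0$. Hence $0 \in U$, and $U$ is a linear subspace.

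I expect the main obstacle to be precisely this last computation: recognizing $\prod_{y \in L}(X - y)$ as a linearized polynomial, so that its derivative is a \emph{nonzero constant}. This is what makes $\sum_{y \in L}(a+y)^{-1}$ provably nonzero, and it is the only place the inverse structure of $\finv$ (rather than just its bijectivity) really enters. Everything else --- the bijectivity of $\finv$ and the coset sum identity --- is routine once Lemma~\ref{lemma:linersum0} is available. If one prefers to avoid citing the theory of linearized polynomials, the same conclusion follows by verifying directly that $\ell'(X)$ is a nonzero constant, i.e.\ that the numerator of the rational function $\sum_{y \in L}(X - y)^{-1}$ over the common denominator $\ell(X)$ has degree $0$ with nonvanishing constant term $a_0$.
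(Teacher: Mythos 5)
Your proof is correct, and its first half coincides with the paper's: both apply Lemma~\ref{lemma:linersum0} to the coset $\finv(U)$ to obtain $\sum_{x \in U}\finv(x) = 0$, and both reduce the theorem to showing that $\sum_{y \in L}(a+y)^{-1}$ cannot vanish when $a \notin L$. The difference lies in the finishing device, and it is a genuine one. The paper introduces the polynomial $h(y) = \sum_{s \in L}\prod_{t \in L \setminus \{s\}}(y+t)$ --- which is precisely your $\ell'$, although the paper never identifies it as a derivative --- and argues by root counting: $h(x) = \prod_{t\in L}(x+t)\cdot\sum_{s \in L}\finv(x+s) = 0$ for every $x \in U$, yet $\deg h < |U|$ and $h(0) \neq 0$, a contradiction. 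You instead recognize $\ell(X) = \prod_{y \in L}(X-y)$ as a $p$-linearized polynomial, so that $\ell'$ is identically the nonzero constant $a_0$, and the logarithmic derivative then evaluates the sum exactly: $\sum_{y \in L}(a+y)^{-1} = a_0/\ell(a) \neq 0$. Your route proves a stronger, pointwise fact --- the sum of inverses over any coset avoiding $0$ is explicitly nonzero, so a single element of $U$ already yields the contradiction, whereas the paper needs all $|U|$ elements of $U$ to serve as roots of $h$ --- at the price of invoking (or reproving) the fact that subspace polynomials are linearized; the paper's argument is more elementary and self-contained, using nothing beyond the bound on the number of roots of a nonzero polynomial.
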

\begin{proof} 
Let $U = a + L$ and $\finv(U) = U' = a' + L'$, where $L$ and $L'$ are linear subspaces, $a,a' \in \fd{p^n}$. 

First of all, we prove some property of $U$ and $U'$. Taking into account that $|U| > 2$, Lemma~\ref{lemma:linersum0} provides that 
$$
	\sum_{y \in \finv(U)}y = \sum_{y \in L'} (a' + y) = p^{\dim U} a' + \sum_{y \in L'} y = 0 + 0 = 0.
$$
Since $\finv(U) = \{ \finv(x + s) : s \in L\}$ for any $x \in U$, we obtain that 
\begin{align}\label{eq:solutionsFromU}
	\sum_{s \in L} \finv(x + s) = 0 \text{ for any } x \in U.
\end{align}
Let us define the following polynomial $h$ over $\fd{p^n}$:
\begin{align*}
	h(y) = \sum_{s \in L} \prod_{t \in L\setminus\{s\}}(y + t).
\end{align*}

Now we prove by contradiction that $U$ is a linear subspace. Suppose that $U$ is not a linear subspace. Let $x$ be any element of $U$. Therefore, $x \neq s$ for any $s \in L$. At the same time, $x + s \neq 0$ since $-s \in L$ as well. As a result, 
\begin{align*}
	h(x) = \sum_{s \in L} \prod_{t \in L\setminus\{s\}}(x + t) = \prod_{t \in L}(x + t) \sum_{s \in L} \finv(x + s) \stackrel{(\ref{eq:solutionsFromU})}{=} 0.
\end{align*}
Thus, any $x \in U$ is a root of $h$. However, the degree of $h$ is less than $|U| = |L|$. Also, $h \not\equiv 0$ since 
$$
	h(0) = \prod_{t \in L\setminus\{0\}}(0 + t) = \prod_{t \in L\setminus\{0\}}t \neq 0.
$$
In other words, $h$ cannot have $|U|$ roots. This is a contradiction. Hence, $U$ is a linear subspace.
The theorem is proven.
\end{proof}

The rest of the proofs of this section use Hua's identity~\cite{Hua1949} (it can also be found in~\cite{Jacobson1974,LidlNiederreiter1983}). They are very similar to ones proposed in~\cite{GoldsteinEtAl2006,Mattarei2007} for linear invariant subspaces of the function $\finv$. 
\begin{lemma}\label{lemma:a2b}
	Let $L$ and $\finv(L)$ be linear subspaces of $\fd{p^n}$. Then $a^2 \finv(b) \in L$ for any $a, b \in L$. 
\end{lemma}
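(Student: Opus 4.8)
The plan is to exploit the commutativity of $\fd{p^n}$ together with Hua's identity, which lets one express the product $a^2 \finv(b)$ purely in terms of additions, subtractions and inversions. Concretely, I would invoke Hua's identity in the form
$$
	a^2 b^{-1} = a - \bigl( a^{-1} + (b - a)^{-1} \bigr)^{-1},
$$
which holds in a field whenever $a \neq 0$, $b \neq 0$ and $a \neq b$. A short direct computation confirms both the identity and the range of validity: one finds $a^{-1} + (b-a)^{-1} = b / \bigl(a(b-a)\bigr)$, so this quantity is nonzero precisely when $b \neq 0$, and every inverse on the right-hand side is then well defined under the stated hypotheses.

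Next I would dispose of the degenerate cases separately, since Hua's identity requires nonzero arguments. If $a = 0$ or $b = 0$, then $a^2 \finv(b) = 0 \in L$ trivially. If $a = b \neq 0$, then $a^2 \finv(b) = a^2 b^{-1} = a \in L$ directly. In all remaining cases $a, b \in \fdm{p^n}$ with $a \neq b$, so the identity above applies and $\finv$ agrees with $x \mapsto x^{-1}$ on every argument that occurs.

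For the main case I would then read off membership term by term, using repeatedly that $L$ and $L' := \finv(L)$ are linear (hence additive) subspaces and that $\finv$ is an involution, so $\finv(L') = \finv(\finv(L)) = L$. Indeed $a \in L$ gives $a^{-1} = \finv(a) \in L'$; since $b - a \in L$ we get $(b-a)^{-1} = \finv(b-a) \in L'$; their sum lies in $L'$, so its inverse $\finv\bigl(a^{-1} + (b-a)^{-1}\bigr)$ lies in $\finv(L') = L$; and finally subtracting this element from $a \in L$ keeps us in $L$. Hence the right-hand side of the identity, namely $a^2 \finv(b)$, belongs to $L$.

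The computations here are routine, so the only genuine care is bookkeeping: making sure that every intermediate argument fed to $\finv$ is exactly the inverse guaranteed by the identity (never the exceptional value $0$), and that the degenerate cases are handled before applying Hua's identity. The conceptual heart, which is entirely supplied by Hua's identity, is that a multiplicative expression like $a^2 \finv(b)$ can be manufactured from additive-subspace and inversion-stability data alone; once that is in place, the involution property $\finv(\finv(L)) = L$ closes the argument.
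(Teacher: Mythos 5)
Your proof is correct and follows essentially the same route as the paper: both handle the degenerate cases $a = 0$, $b = 0$, $a = b$ separately, then apply Hua's identity and chase memberships through $L$ and $\finv(L)$ using additivity and the involution property $\finv(\finv(L)) = L$. The only cosmetic difference is that you state the identity in its commutative form $a^2b^{-1} = a - \bigl(a^{-1} + (b-a)^{-1}\bigr)^{-1}$ and verify it directly, while the paper quotes the general (division-ring) form $a - \bigl(a^{-1} + (b'^{-1} - a)^{-1}\bigr)^{-1} = ab'a$ with $b' = \finv(b)$.
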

\begin{proof} 
Hua's identity
$$
	a - (a^{-1} + (b'^{-1} - a)^{-1})^{-1} = ab'a
$$
holds for any $a, b' \in \fd{p^n}$, $a,b' \neq 0, ab' \neq 1$. Note that it works not only for fields and can be easily proven. 

Let $b' = \finv(b)$.
It is clear that $a^2 \finv(b) \in L$ if $a = 0$, $b' = 0$ or $a b' = 1$. Next, we consider other cases. It means that $b' = b^{-1}$. Since $\finv(\finv(L)) = L$  and both $L$ and $\finv(L)$ are linear subspaces, the identity provides that 
\begin{align*}
b - a = b'^{-1} - a &\in L,\\ 
a^{-1} + (b - a)^{-1} &\in \finv(L),\\ 
a - (a^{-1} + (b - a)^{-1})^{-1} &\in L. 
\end{align*}
Hence, $ab'a = a^2 \finv(b) \in L$ as well.
The lemma is proven.
\end{proof}

The following theorem describes all considered subspaces.
\begin{theorem}\label{th:stableSubspaces}
	Let $U$ be an affine subspace of $\fd{p^n}$ and $|U| > 2$. Then $\finv(U)$ is an affine subspace if and only if $U = q \fd{p^k}$, where $k \mid n$ and $q \in \fdm{p^n}$.
\end{theorem}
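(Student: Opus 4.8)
The plan is to prove both implications, with essentially all the work sitting in the ``only if'' direction. For the easy direction, suppose $U = q\fd{p^k}$ with $q \in \fdm{p^n}$ and $k \mid n$. Since $\finv(qx) = q^{-1}\finv(x)$ for every $x$, and inversion permutes the nonzero elements of the subfield $\fd{p^k}$ while fixing $0$, one immediately gets $\finv(U) = q^{-1}\fd{p^k}$, which is a linear (hence affine) subspace. Nothing more than these two observations is needed here.

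For the converse, assume $U$ and $\finv(U)$ are affine subspaces with $|U| > 2$; I would first reduce to a normalized linear situation. By Theorem~\ref{th:noaffine}, $U = L$ is already a linear subspace, and since $0 \in L$ forces $0 = \finv(0) \in \finv(L)$, the affine subspace $\finv(L)$ contains $0$ and is therefore linear as well. Next I would normalize by scaling: for any $q \in L \setminus \{0\}$, the subspace $M = q^{-1}L$ is linear, contains $1$, and satisfies $\finv(M) = q\,\finv(L)$, which is again linear. Since $M = \fd{p^k}$ would give $L = qM = q\fd{p^k}$, it suffices to prove the theorem under the extra assumption $1 \in L$.

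With $1 \in L$ in force, Lemma~\ref{lemma:a2b} gives $a^2\finv(b) \in L$ for all $a, b \in L$. Taking $a = 1$ shows $\finv(b) \in L$ for every $b \in L$, i.e.\ $\finv(L) \subseteq L$; as $\finv$ is an involution preserving cardinality, this forces $\finv(L) = L$, so $L$ is closed under inversion. Substituting $b = \finv(c)$, which again lies in $L$, then yields the clean closure property
\[
	a^2 c \in L \quad \text{for all } a, c \in L.
\]
It remains to upgrade this to closure under multiplication, after which $L$ is a finite subset of $\fd{p^n}$ closed under addition and multiplication and containing $1$, hence an integral domain and therefore a subfield $\fd{p^k}$ with $k \mid n$, completing the proof.

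To obtain multiplicative closure I would split on the characteristic. If $p \neq 2$, polarization works directly: expanding $(a+c)^2 b = a^2 b + 2\,acb + c^2 b$ and using $a^2 b, c^2 b \in L$ gives $2\,acb \in L$, whence $acb \in L$ after dividing by $2 \in \fdm{p}$, and setting $b = 1$ yields $ac \in L$. The main obstacle is the case $p = 2$, where this polarization collapses since $2\,acb = 0$ carries no information. Here I would instead set $c = 1$ to get $a^2 \in L$, so the Frobenius map $x \mapsto x^2$ sends $L$ into itself; as $L$ is finite this map is a bijection of $L$, so $L$ is also closed under square roots. Taking the square root of $a^2 c \in L$ and using that square root is a field homomorphism in characteristic $2$ gives $a\,c^{1/2} \in L$, and letting $c$ range over $L$ (equivalently, $c^{1/2}$ range over $L$) yields $aL \subseteq L$, which is exactly closure under multiplication.
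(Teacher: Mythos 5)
Your proof is correct and follows essentially the same route as the paper: reduce to a linear subspace via Theorem~\ref{th:noaffine}, normalize to $1 \in L$ by scaling, apply Lemma~\ref{lemma:a2b} with $a=1$ to get $\finv(L) = L$ and $a^2c \in L$, then obtain multiplicative closure by polarization for $p \neq 2$ and by bijectivity of the Frobenius map on $L$ for $p = 2$. The only cosmetic differences are that you phrase the characteristic-$2$ step via square roots rather than writing $c = a^2$ directly, and you close with the finite-integral-domain argument where the paper simply invokes the already-established closure properties.
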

\begin{proof} 
First of all, $\finv(\fd{p^k}) = \fd{p^k}$ for $k \mid n$ since it is a subfield of $\fd{p^n}$. Also, $\finv(q \fd{p^k}) = q^{-1}\finv(\fd{p^k})$ which is a linear subspace. The sufficient condition is proven. 

Next, we prove the necessary condition. Let $L = U$ and $\finv(L)$ be an affine subspace. By Theorem~\ref{th:noaffine}, $L$ is a linear subspace. In fact, $\finv(L)$ is linear as well since $0 \in L$ and $\finv(0) = 0$. Without loss of generality we assume that $1 \in L$ since we can consider any $q L$ instead of $L$, $q \in \fdm{p^n}$. Indeed, $\finv(qL)$ is a linear subspace since $\finv(qL) = q^{-1} \finv(L)$. In light of that, we need only to prove that $L$ is a subfield of $\fd{p^n}$.

According to Lemma~\ref{lemma:a2b}, $a^2 \finv(b) \in L$ for any $a, b \in L$. But $1 \in L$. It means that $\finv(b) \in L$ and $a^2 \in L$. As a consequence, $L = \finv(L)$ and $a^2 b, a^2 \in L$ for any $a, b \in L$.

We know that $0, 1 \in L$, $L$ is an additive subgroup and all inversions belong to $L$. The final step is to prove that  $bc \in L$ for any $b, c \in L$. There are two cases.

\textbf{Case 1.} $p = 2$. Let $b, c \in L$. Since  $a^2 \in L$ for any $a \in L$, there always exists $a \in L$ such that $c = a^2$. Thus, $bc = a^2b \in L$. 

\textbf{Case 2.} $p \neq 2$. For any $b, c \in L$ it holds that $2bc = (c + 1)^2 b - b - c^2b \in L$. Indeed, all these terms belong to $L$. Since $2 \in \fd{p}$ and $L$ is a linear subspace, $bc \in L$ as well.

Thus, $L$ is a subfield of $\fd{p^n}$, i.e. $L = \fd{p^k}$, where $k \mid n$. The theorem is proven.
\end{proof}

\begin{remark}
    Theorem~\ref{th:stableSubspaces} shows that for prime $n > 2$ and $p = 2$ the APN function $\finv$ maps any nontrivial affine subspace to a set that is not an affine subspace.
\end{remark}

\section{Functions without invariant subspaces}\label{section:InvariantSubspaces}

 Here we show how to apply the results of Section~\ref{section:AllAffineSubspaces} to construct  functions without invariant affine subspaces of the form
 $$
    \finv_{A,b}(x) = A(\finv(x)) + b,
 $$ 
 where $A : \fd{p^n} \to \fd{p^n}$ is an invertible linear transformation and $b \in \fd{p^n}$. They are affinely equivalent to $\finv$.
Let us define
$$
	\fdo{p^n} = \{ x \in \fd{p^n} : x \notin \fd{p^k}, \text{ where } k \mid n \text{ and } k < n\}.
$$
In other words, $\fdo{p^n}$ contains all elements of $\fd{p^n}$ which do not lie in its subfields. It is always nonempty.
\begin{theorem}\label{th:invatiantSuffCondition}
	Let $A: \fd{p^n} \to \fd{p^n}$ be an invertible linear transformation, $b \in \fdm{p^n}$ and
\begin{equation}\label{eq:ABcondition}
	b^{-1}A(b^{-1}) \in \fdo{p^n}.
\end{equation}
Then $\finv_{A,b}$ has no invariant affine subspaces $U$ such that $2 < |U| < p^n$.
\end{theorem}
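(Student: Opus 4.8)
The plan is to argue by contradiction. Suppose $\finv_{A,b}$ had an invariant affine subspace $U$ with $2 < |U| < p^n$. Since $\finv$ is a bijection (indeed an involution), $A$ is invertible, and translation by $b$ is a bijection, the composite $\finv_{A,b}$ is a bijection of $\fd{p^n}$; as $U$ is finite, the invariance $\finv_{A,b}(U) \subseteq U$ upgrades to the equality $\finv_{A,b}(U) = U$, that is, $A(\finv(U)) + b = U$.

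The first key observation I would make is that this forces $\finv(U)$ to be an affine subspace. Indeed, solving for $\finv(U)$ gives $\finv(U) = A^{-1}(U - b)$, and the right-hand side is an affine subspace because $A^{-1}$ is a linear bijection and $U - b$ is a translate of $U$. Since $|U| > 2$, Theorem~\ref{th:stableSubspaces} now applies and yields $U = q\fd{p^k}$ for some $q \in \fdm{p^n}$ and $k \mid n$; the constraint $|U| < p^n$ forces $k < n$, so $\fd{p^k}$ is a proper subfield. In particular $\finv(U) = q^{-1}\fd{p^k}$, and the invariance identity becomes $A(q^{-1}\fd{p^k}) + b = q\fd{p^k}$.

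Next I would extract the structural content of this last identity. The image $M = A(q^{-1}\fd{p^k})$ is a linear subspace, so $M + b$ is a coset of $M$; for this coset to coincide with the linear subspace $q\fd{p^k}$, which contains $0$, one needs $b \in M$, whence $M + b = M$ and therefore both $b \in q\fd{p^k}$ and $A(q^{-1}\fd{p^k}) = q\fd{p^k}$. These two facts are exactly what is needed to pin down $b^{-1}A(b^{-1})$.

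Finally I would compute directly. Writing $b = qc$ with $c \in \fdm{p^k}$ (legitimate since $b \in q\fd{p^k}$), we get $b^{-1} = q^{-1}c^{-1} \in q^{-1}\fd{p^k}$, hence $A(b^{-1}) \in A(q^{-1}\fd{p^k}) = q\fd{p^k}$, say $A(b^{-1}) = qd$ with $d \in \fd{p^k}$. Then $b^{-1}A(b^{-1}) = (q^{-1}c^{-1})(qd) = c^{-1}d \in \fd{p^k}$, a proper subfield, which contradicts the hypothesis $b^{-1}A(b^{-1}) \in \fdo{p^n}$. I expect the main obstacle to be the middle step, namely correctly reading off $b \in q\fd{p^k}$ and $A(q^{-1}\fd{p^k}) = q\fd{p^k}$ from the coset identity, since everything hinges on the elementary but essential fact that a coset of a linear subspace is itself a subspace only when it passes through the origin; once these are in hand, the concluding computation is a one-line verification.
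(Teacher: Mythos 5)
Your proof is correct and follows essentially the same route as the paper: apply Theorem~\ref{th:stableSubspaces} to conclude $U = q\fd{p^k}$ with $k < n$, then use the invariance identity to force $b^{-1}A(b^{-1})$ into the proper subfield $\fd{p^k}$, contradicting~(\ref{eq:ABcondition}). The only differences are cosmetic: you phrase the middle step structurally (the coset $A(q^{-1}\fd{p^k}) + b$ can equal a linear subspace only if it passes through the origin), whereas the paper substitutes the elements $0$ and $b^{-1}$ into the identity directly, and you make explicit two points the paper leaves implicit --- that injectivity upgrades $\finv_{A,b}(U) \subseteq U$ to equality, and that $\finv(U) = A^{-1}(U - b)$ is affine so that Theorem~\ref{th:stableSubspaces} indeed applies.
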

\begin{proof}
Let $U$ be affine subspace of $\fd{p^n}$ and $\finv_{A,b}(U) = U$, where $|U| > 2$.  Theorem~\ref{th:stableSubspaces} provides that $U = q^{-1}K$ and $b + A(qK) = U$ for some subfield $K$ of $\fd{p^n}$ and $q \in \fdm{p^n}$. This is equivalent to
\begin{equation}\label{eq:invK}
	 qb + qA(qK) = K.
\end{equation}
Since $0 \in K$, (\ref{eq:invK}) provides that $qb = qb + qA(q \cdot 0)  \in K$. Thus, $q = b^{-1} u$, where $u \in K \setminus \{0\}$. According to~(\ref{eq:invK}), 
$$
	qb + qA(q \cdot u^{-1}) = qb + u b^{-1} A(b^{-1} u u^{-1}) = qb + u b^{-1} A(b^{-1}) \in K.
$$
At the same time, both $qb$ and $u$ belong to $K$. Hence, $b^{-1}A(b^{-1}) \in K$. As a result, $b^{-1}A(b^{-1}) \notin \bigcup_{k \mid n, k < n} \fd{p^k}$ implies that $U = K = \fd{p^n}$.
\end{proof}
\begin{remark}\label{remark:invatiantSuffCondition}
	The condition~(\ref{eq:ABcondition}) is equivalent to
	\begin{equation}\label{eq:InvABcondition}
		b^{-1}\finv_{A,b}(b) \in \fdo{p^n}.
\end{equation}
\end{remark}
\begin{proof}
	Indeed, $b^{-1}\finv_{A,b}(b) = b^{-1}A(b^{-1}) + 1$. At the same time, $x \in \fdo{p^n}$ if and only if $x + 1\in \fdo{p^n}$ since $1$ lies in any subfield of $\fd{p^n}$.    
\end{proof}
\begin{remark}
Recall that it is not difficult to determine if $x \in \fdo{p^n}$. It is enough to check that $x^{p^k} \neq x$ for any $k \mid n$, $k < n$.
\end{remark}

Even if $A$ does not satisfy~(\ref{eq:ABcondition}), we can modify it and obtain a function without nontrivial invariant subspaces.
\begin{corollary}
	Let $A: \fd{p^n} \to \fd{p^n}$ be invertible linear transformation, $b \in \fdm{p^n}$. Then the transformation $A' : \fd{p^n} \to \fd{p^n}$ such that
$$
	A'(x) = \alpha \beta A(x), \text{ where } \alpha \in \fdo{p^n} \text{ and } \beta = (b^{-1}A(b^{-1}))^{-1},
$$
together with $b$ satisfy the conditions of Theorem~\ref{th:invatiantSuffCondition}.
\end{corollary}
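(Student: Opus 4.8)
The plan is to verify directly that the pair $(A', b)$ meets the hypothesis~(\ref{eq:ABcondition}) of Theorem~\ref{th:invatiantSuffCondition}, namely that $b^{-1} A'(b^{-1}) \in \fdo{p^n}$. The argument is essentially a one-line computation, so the only real work is checking that the objects involved are well defined.

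First I would confirm that $\beta$ makes sense and that $A'$ is an admissible transformation. Since $A$ is invertible and $b^{-1} \neq 0$, we have $A(b^{-1}) \neq 0$, hence $b^{-1}A(b^{-1}) \neq 0$ and $\beta = (b^{-1}A(b^{-1}))^{-1}$ is a well-defined element of $\fdm{p^n}$. Also $\alpha \in \fdo{p^n}$ forces $\alpha \neq 0$, because $0$ lies in $\fd{p}$ and therefore in every subfield of $\fd{p^n}$. Consequently $\alpha\beta \in \fdm{p^n}$, and since multiplication by a nonzero field element is an invertible $\fd{p}$-linear map, the composition $A'(x) = \alpha\beta A(x)$ is again an invertible linear transformation, as required by Theorem~\ref{th:invatiantSuffCondition}.

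Next I would compute $b^{-1}A'(b^{-1})$ and simplify. Using the definition of $A'$ together with the commutativity of scalar multiplication with $A$,
$$
	b^{-1}A'(b^{-1}) = b^{-1}\,\alpha\beta\, A(b^{-1}) = \alpha\beta\,\bigl(b^{-1}A(b^{-1})\bigr).
$$
By the choice of $\beta$ the factor $\beta\bigl(b^{-1}A(b^{-1})\bigr)$ equals $1$, so the whole expression collapses to $\alpha$. Hence $b^{-1}A'(b^{-1}) = \alpha \in \fdo{p^n}$, which is precisely condition~(\ref{eq:ABcondition}) for the pair $(A', b)$.

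There is no genuine obstacle here: the statement is engineered so that the scalar $\beta$ cancels the possibly ``bad'' value $b^{-1}A(b^{-1})$ and replaces it by the freely chosen $\alpha \in \fdo{p^n}$. The only points deserving a word of care are the non-vanishing of $\beta$ and $\alpha$ (so that $A'$ remains invertible) and the fact that scalar multiplication commutes with $A$, both of which are immediate. Once $b^{-1}A'(b^{-1}) = \alpha$ is established, Theorem~\ref{th:invatiantSuffCondition} applies verbatim and yields that $\finv_{A',b}$ has no invariant affine subspaces $U$ with $2 < |U| < p^n$.
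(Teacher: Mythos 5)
Your proposal is correct and follows essentially the same route as the paper: check that $\beta$ is well defined (via $A$ invertible, so $A(b^{-1})\neq 0$), that $A'$ is an invertible linear transformation, and then the one-line computation $b^{-1}A'(b^{-1})=\alpha\beta\,(b^{-1}A(b^{-1}))=\alpha\in\fdo{p^n}$. Your explicit remark that $\alpha\neq 0$ (since $0$ lies in every subfield) is a small tidy addition the paper leaves implicit, but otherwise the arguments coincide.
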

\begin{proof}
$A'$ is defined correctly since $b^{-1}A(b^{-1}) \neq 0$. Indeed, $A(0) = 0$ and $A$ is invertible. It means that $A(b^{-1}) \neq 0$.
Let us check the properties of $A'$. It is clear that $A'$ is linear and invertible since it is a composition of two linear invertible transformations.  Also, let us check~(\ref{eq:ABcondition}):
\begin{align*}
	b^{-1}A'(b^{-1}) &= b^{-1} \alpha (b^{-1}A(b^{-1}))^{-1} A(b^{-1}) = \alpha.
\end{align*}
As a result, $A'$ and $b$ satisfy the conditions of Theorem~\ref{th:invatiantSuffCondition}. 
\end{proof}

However, the proposed condition does not guarantee that $\finv_{A,b}$ has no invariant affine subspaces $U$ such that $1 \leq |U| \leq 2$, see, for instance, the next section. It is clear that
\begin{itemize}
 	\item any $U = \{ u \}  \subseteq \fd{p^n}$ is invariant for $\finv_{A,b}$ if and only if $u$ is a fixed point of $\finv_{A,b}$, 
	\item any $U = \{ u, v\} \subseteq \fd{p^n}$, where $u \neq v$, is an invariant affine subspace if and only if $p = 2$ and either $u,v$ are fixed point of $\finv_{A,b}$ or $u = \finv_{A,b}(v)$, $v = \finv_{A,b}(u)$.
\end{itemize}
It is not easy to exclude these cases for arbitrary $A$. At the same time, we can do this for the simplest linear functions. 
\begin{theorem}\label{th:invAlphaB}
	Let  $\alpha, b \in \fdm{p^n}$, $\finv_{\alpha, b} : \fd{p^n} \to \fd{p^n}$ such that $$\finv_{\alpha,b}(x) = \alpha\finv(x) + b.$$ Then  $\finv_{\alpha, b}$ has no invariant affine subspaces except for $\fd{p^n}$ if and only if
	\begin{itemize}
		\item $\alpha b^{-2} \in M_2$ for $p = 2$, where
	\begin{align}\label{eq:M2}
		M_2 &= \{ x \in \fdo{2^n} : \mathrm{tr}(x) = 1 \}, \mathrm{tr}(x) = x^{2^0} + x^{2^1} + \ldots + x^{2^{n-1}},
	\end{align}	
		\item $\alpha b^{-2} + 4^{-1} \in M_p$ for $p \neq 2$, where
	\begin{align}\label{eq:Mp}
		M_p &=	\{ x \in \fdo{p^n} : x \neq y^2 \text{ for any } y \in \fd{p^n} \}.
	\end{align}	
	\end{itemize}
\end{theorem}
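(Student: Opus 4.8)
The plan is to split a candidate invariant affine subspace $U$ into three cardinality regimes and treat each separately: the medium range $2 < |U| < p^n$, the singletons $|U| = 1$, and (only when $p = 2$) the pairs $|U| = 2$. The full space $\fd{p^n}$ is always invariant under the bijection $\finv_{\alpha,b}$, so it is exactly the one exception the theorem permits, and I must show all three regimes are empty precisely under the stated conditions. The useful observation throughout is that here $A$ is multiplication by $\alpha$, so $b^{-1}A(b^{-1}) = \alpha b^{-2}$, which ties the medium regime directly to Theorem~\ref{th:invatiantSuffCondition}.

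For the medium regime I would first note that $\finv_{\alpha,b}(U) = U$ forces $\finv(U) = \alpha^{-1}(U - b)$ to be an affine subspace, so Theorem~\ref{th:stableSubspaces} applies and $U = q^{-1}K$ for a subfield $K$ and $q \in \fdm{p^n}$. Repeating the computation in the proof of Theorem~\ref{th:invatiantSuffCondition} with $A(x) = \alpha x$ turns invariance into $\alpha q^2 K + qb = K$; since $0 \in K$ this gives $qb \in K$, and writing $q = b^{-1}u_0$ with $u_0 \in K \setminus \{0\}$ and using $u_0^2 K = K$ collapses the condition to $\alpha b^{-2} K + u_0 = K$. Evaluating at $1 \in K$ shows this is equivalent to $\alpha b^{-2} \in K$, and then $U = bK$. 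Conversely, whenever $\alpha b^{-2} \in K$ the set $bK$ is invariant. Hence no medium invariant subspace exists exactly when $\alpha b^{-2}$ lies in no proper subfield $K$ with $|K| > 2$.

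For the singleton regime I would solve $\finv_{\alpha,b}(u) = u$: the value $u = 0$ is never fixed since $b \neq 0$, and for $u \neq 0$ the equation reduces to the quadratic $u^2 - bu - \alpha = 0$. When $p \neq 2$ this is solvable iff its discriminant $b^2 + 4\alpha = (2b)^2(\alpha b^{-2} + 4^{-1})$ is a square, i.e. iff $\alpha b^{-2} + 4^{-1}$ is a square; thus the absence of a fixed point is exactly the non-square requirement in $M_p$. When $p = 2$ the substitution $v = ub^{-1}$ turns it into $v^2 + v + \alpha b^{-2} = 0$, which is solvable iff $\mathrm{tr}(\alpha b^{-2}) = 0$, so no fixed point exists iff $\mathrm{tr}(\alpha b^{-2}) = 1$, the trace requirement in $M_2$. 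For the pair regime (possible only when $p = 2$), a short computation from $u(v+b) = \alpha = v(u+b)$ shows that a $2$-cycle with both points nonzero forces $b(u+v) = 0$ and hence $u = v$, a contradiction; the only remaining invariant pair is $\{0, b\}$, which is invariant iff $\alpha = b^2$, i.e. $\alpha b^{-2} = 1$.

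Finally I would assemble the regimes. For $p \neq 2$, since $4^{-1}$ lies in the prime field, the shift argument of Remark~\ref{remark:invatiantSuffCondition} gives $\alpha b^{-2} + 4^{-1} \in \fdo{p^n}$ iff $\alpha b^{-2} \in \fdo{p^n}$, so the medium exclusion and the non-square fixed-point exclusion combine to exactly $\alpha b^{-2} + 4^{-1} \in M_p$, with pairs not arising for odd $p$. For $p = 2$ the delicate step --- which I expect to be the main obstacle --- is reconciling the boundary subfield $\fd{2}$ with the small-subspace analysis: the medium computation only excludes $\alpha b^{-2}$ from proper subfields of size exceeding $2$, so I must separately rule out $\alpha b^{-2} \in \fd{2} = \{0, 1\}$; the value $0$ is impossible since $\alpha, b \neq 0$, while $\alpha b^{-2} = 1$ is precisely the invariant pair $\{0, b\}$ (and when $n$ is even it also gives $\mathrm{tr}(1) = 0$, hence a fixed point). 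Together these upgrade the medium condition to $\alpha b^{-2} \in \fdo{2^n}$ with $\mathrm{tr}(\alpha b^{-2}) = 1$, i.e. $\alpha b^{-2} \in M_2$, and the criterion follows in both directions.
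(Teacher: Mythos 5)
Your proposal is correct and follows essentially the same route as the paper: the same split by cardinality ($2 < |U| < p^n$ via Theorems~\ref{th:stableSubspaces} and~\ref{th:invatiantSuffCondition}, fixed points via the quadratic $y^2 - y - \alpha b^{-2} = 0$, and pairs via a short computation), assembled into the same criterion. If anything, your handling of the boundary subfield $\fd{2}$ for $p=2$ is slightly more careful than the paper's Case~1, which states an ``if and only if'' for the medium regime that strictly speaking overlooks the case $|K|=2$ (the paper recovers it only implicitly through its Case~3.1, where $\alpha b^{-2} = -1 \notin \fdo{p^n}$).
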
 
\begin{proof}
	Let $U$ be an invariant affine subspace of $\finv_{\alpha, b}$. We consider a criterion of the existence of such $U$ depending on $|U|$.
		
	\textbf{Case 1.} $2 < |U| < p^n$, i.e. $U$ is nontrivial. According to Theorem~\ref{th:invatiantSuffCondition}, $\alpha b^{-2} \in \fdo{p^n}$ implies that such $U$ does not exist. Let us prove that such $U$ exists if $\alpha b^{-2} \notin \fdo{p^n}$, i.e. $\alpha b^{-2} \in K$ for some subfield $K$ of $\fd{p^n}$, $|K| < |p^n|$. According to~(\ref{eq:invK}), if
	\begin{align*}
		qb + q \alpha q x \in K \text{ for any } x \in K,
	\end{align*}	
	then $U = qK$ is invariant. Let $q = b^{-1}$. Then, $1 + (\alpha b^{-2}) x \in K$ for any $x \in K$. In other words, $U = b^{-1} K$ is invariant.
	
	Thus, such $U$ does not exist if and only if $\alpha b^{-2} \in \fdo{p^n}$. Also, if $p \neq 2$, we can see that $\alpha b^{-2} \in \fdo{p^n}$ if and only if $\alpha b^{-2} + 4^{-1} \in \fdo{p^n}$ since $4^{-1}$ lies in any subfield of $\fd{p^n}$.

	\textbf{Case 2.} $|U| = 1$, i.e. we consider fixed points of $\finv_{\alpha, b}$. Let $x \in \fdm{p^n}$ since $0$ is not a fixed point of $\finv_{\alpha, b}$. Then $x$ is a fixed point if and only if
	\begin{align*}
		\alpha x^{-1} + b = x.
	\end{align*}
	Multiplying both parts by $xb^{-2} \neq 0$, we obtain the following equivalent equation:
    \begin{align}\label{eq:fixedPoint}
		y^2 - y - \alpha b^{-2} = 0, \text{ where } y = xb^{-1}.
	\end{align}
	
	\textbf{Case 2.1.} $p = 2$. 	It is known that the equation~(\ref{eq:fixedPoint}) has a solution if and only if $\mathrm{tr}(\alpha b^{-2}) = 0$, see, for instance,~\cite{CherlyEtAl1998}.

	\textbf{Case 2.2.} $p \neq 2$. The equation~(\ref{eq:fixedPoint}) has a solution if and only if $(y - 2^{-1})^2 = 4^{-1} + \alpha b^{-2}$ has a solution. 
	Thus, there is a fixed point if and only if $\alpha b^{-2} + 4^{-1} \in \{ x^2 : x \in \fd{p^n}\}$.
	
	\textbf{Case 3.} $|U| = 2$, i.e. $U = \{x, y\} \subseteq \fd{p^n}$, $x \neq y$. Let $U$ be invariant. Then 
	\begin{align}
		\alpha \finv(\alpha \finv(x) + b) + b = x, \text{ i.e. }\nonumber\\
		\alpha \finv(x) + b  = \alpha \finv(x - b). \label{eq:invDim2Condition}
	\end{align}
	
	\textbf{Case 3.1.} $x \in \{0, b\}$. If $x = 0$, then $\alpha \finv(0) + b = b \neq 0$. Let $x = b$. According to~(\ref{eq:invDim2Condition}), $\alpha \finv(b) + b = 0 \neq b$. In other word, $U = \{0, b\}$ for both cases. Also, $\alpha \finv(b) + b = 0$ is equivalent to $\alpha b^{-2} = -1$. Moreover, it is not difficult to see that $\finv_{\alpha, b}(\{0, b\}) = \{0, b\}$ if and only if $\alpha b^{-2} = -1$. At the same time, $-1 \notin \fdo{p^n}$.
	
	\textbf{Case 3.2.} $x \notin \{0, b \}$. We multiply both parts of the last equation by $b^{-2}x(x - b)$:
	\begin{align*}
		b^{-2}x(\alpha &\finv(x) + b) (x - b)  = b^{-2} x\alpha, \text{ i.e. }\\
		b^{-2}(bx^2 - \alpha b - b^2 &x) = b^{-2}(\alpha - \alpha \finv(x) x) x, \text{ or, equivalently,}\\
		&(xb^{-1})^2 - xb^{-1} - \alpha b^{-2}  = 0.
	\end{align*}
	Since it is equivalent to~(\ref{eq:fixedPoint}), $x$ is a fixed point of $\finv_{\alpha, b}$, see case 2.

	Finally, $\fd{p^n}$ is the only invariant affine subspace of $\finv_{\alpha, b}$ if and only if $\alpha b^{-2} \in M_2$ for $p = 2$ and $\alpha b^{-2} + 4^{-1} \in M_p$ for $p \neq 2$.
\end{proof}
\begin{remark}
    The sets $M_p$ defined in~(\ref{eq:M2}) and (\ref{eq:Mp}) are nonempty.
\end{remark}
\begin{proof}
    It is straightforward that $|\{ x \in \fd{2^n} : \mathrm{tr}(x) = 1\}| = 2^{n-1}$ and $|\{ x \in \fd{p^n} : x \neq y^2 \text{ for any } y \in \fd{p^n}\}| = \frac{1}{2}(p^{n} - 1)$.
    At the same time, we can estimate the number of elements that lie in subfields of $\fd{p^n}$:
$$
	|\bigcup\limits_{k|n,k < n} \fd{p^k}| < 1 + p + p^2 + \ldots + p^{n - 2} = \frac{p^{n - 1} - 1}{p - 1}, \text{ where } n \geq 3.
$$
This implies that $M_p$ is nonempty for $n \geq 3$. It is not difficult to check that $M_p$ is nonempty for $n = 1,2$ as well.
\end{proof}


Theorem~\ref{th:invAlphaB} allows us to construct a permutation of an arbitrary $\fd{p^n}$ without invariant affine subspaces (except for the whole $\fd{p^n}$). 
However, its algebraic structure is very simple. Also, if $n$ is not prime, there always exists a linear subspace $L \neq \fd{p^n}$ such that $\finv_{\alpha, b}(u + L) = v + L$ for some $u,v \in \fd{p^n}$. This makes the function $\finv_{\alpha, b}$ not very good for cryptography.

\section{S-box of AES}\label{section:AES}

The Advanced Encryption Standard (AES) is the most popular block cipher
whose S-box $S : \fd{2}^8 \to \fd{2}^8$ (see~\cite[Figure 7]{AesFIPS}) is based on the function $\finv_{A,b}$ for certain invertible linear transformation $A$ and $b \in \fd{2^8}$, see~\cite{AesFIPS,AesBook}. The field $\fd{2^8}$ is constructed as the quotient ring of polynomials modulo
$$
	m(x) = x^8 + x^4 + x^3 + x + 1 \text{ over } \mathbb{F}_2
$$
and $b = x^6 + x^5 + x + 1 = \mathtt{0x63}$. Let us consider its invariant subspaces.

\begin{itemize}
	\item \textbf{Nontrivial affine subspaces.} It can be shown that $t = b^{-1}S(b) \in \fdo{2^8}$. Indeed, $S(b) = x^7 + x^6 + x^5 + x^4 + x^3 + x + 1 = \mathtt{0xFB}$. It is not difficult to check that $t = x^7 + x^6 + x^3 = \mathtt{0xC8}$.
	Also, 
	\begin{align*}
	    t^2 &= x^6 + x^5 + x^4  + 1 \neq t,\\
	    t^{2^2} &= x^7 + x^6 + x^4 + x^3 + x^2 + 1 \neq t,\\
	    t^{2^4} &= x^7 + x^4 + x^3 + 1 \neq t.
	\end{align*}
	Hence, $t \in \fdo{2^8}$. In other words, Theorem~\ref{th:invatiantSuffCondition} and  Remark~\ref{remark:invatiantSuffCondition} guarantee that $S$ has no nontrivial invariant affine subspaces.
	\item \textbf{Fixed points.} It is known that $S(x) \neq x$ for any $x \in \mathbb{F}_2^8$ by the choice, see~\cite{AesBook}. Therefore, $S$ has no invariant affine subspaces of dimension $0$.
	\item However, 
	$$
		S(\mathtt{0x73}) = \mathtt{0x8F} \text{ and } S(\mathtt{0x8F}) = \mathtt{0x73}.
	$$
	This implies that
	$$
	    \{\mathtt{0x73},\mathtt{0x8F}\} = \{ x^6 + x^5 + x^4 + x + 1, x^7 + x^3 + x^2 + x + 1\}$$ 
	 is its invariant affine subspace of dimension $1$. 
\end{itemize}

Thus, the results obtained allow us to prove theoretically that $S$ has no nontrivial invariant affine subspaces. At the same time, it has an invariant affine subspace of dimension $1$.


\bigskip
\noindent {\normalsize \textbf{Acknowledgements} The work is supported by Mathematical Center in Akademgorodok under agreement No. 075--15--2022--281 with the Ministry of Science and Higher Education of the Russian Federation. }

\end{document}